\documentclass{ifacconf}

\usepackage{graphicx}
\usepackage{natbib}
\usepackage{amsmath,amssymb,bm} 
\usepackage{float}
\usepackage{subcaption}
\usepackage{xcolor}

\makeatletter
\newenvironment{proof}[1][Proof]{%
  \par\noindent\textit{#1. }\rmfamily}{\hfill$\square$\par}
\makeatother

\newtheorem{definition}{Definition}
\newtheorem{proposition}{Proposition}

\newtheorem{lemma}{Lemma}


\definecolor{darkgreen}{rgb}{0.0, 0.7, 0.0}

\newcommand{\eG}{g}
\newcommand{\ug}{s}

\begin{document}
\begin{frontmatter}

\title{Safety-Critical Control on Lie Groups Using Energy-Augmented Zeroing Control Barrier Functions} 

%

\author[Unibos]{Alessandro Letti} 
\author[RaM]{Riccardo Zanella} 
\author[Unibo]{Alessandro Macchelli} 
\author[RaM]{Federico Califano}

\address[Unibos]{Student at University of Bologna. email: alessandro.letti@studio.unibo.it}

\address[RaM]{Robotics and Mechatronics (RaM), University of Twente, 
Drienerlolaan 5, 7522 NB Enschede, The Netherlands, 
(e-mail: f.califano@utwente.nl, r.zanella@utwente.nl)}

\address[Unibo]{Department of Electrical, Electronic and Information Engineering (DEI),
University of Bologna, Italy. email: a.macchelli@unibo.it.}

\begin{abstract}
We study safety-critical control on fully actuated mechanical systems by means of Zeroing Control Barrier Functions (ZCBFs) defined on Lie Groups. 
Specifically, we introduce and theoretically validate two classes of ZCBFs. The first enforces kinematic constraints, suitable for implementing obstacle avoidance algorithms. The second enforces kinetic energy limits along prescribed inertial-frame translational and rotational directions, relevant for ensuring safe physical interaction.  
Numerical simulations involving slit traversal and safe landing scenarios are presented to validate the effectiveness and versatility of the proposed methodology.

\end{abstract}

\begin{keyword}
Safety-Critical Control, Control Barrier Function, Lie Groups, Rigid Body Control.
\end{keyword}

\end{frontmatter}


\section{Introduction}
Geometric control provides a coordinate-free framework for addressing stabilization and tracking problems in mechanical systems whose configurations evolve on manifolds \citep{bullo2019geometric}. The case of fully actuated mechanical systems is well studied, and intrinsic controllers built from Riemannian geometric principles have been developed, yielding globally valid and singularity-free control laws \citep{bullo1999tracking}. A case of interest is concretized when the configuration manifold is a Lie Group, for example the group of rotations $SO(3)$ or full rigid body motions $SE(3)$. Exploiting the group structure, control laws that have been developed and applied to several systems such as quadrotors, UAVs and acquatic vehicles \citep{bullo1999tracking,lee2010geometric,fernando2011robust,rashadse3}.


While geometric control provides tools for stabilization and tracking, these objectives alone are not sufficient in many modern applications.
Robotic systems operating near obstacles, or under aggressive maneuvers must also ensure that the state remains within prescribed safety limits.  
Control Barrier Functions (CBFs) \citep{ames2016control,ames2019control,ferraguti2022safety} offer a principled way to address this need: they filter a nominal controller by enforcing state constraints in a minimally invasive manner, modifying the input only when a violation of the safe set becomes imminent.  
In the classical setting, CBFs are formulated for control–affine systems in euclidean spaces and enforced through min–norm quadratic programs (\cite{ames2016control,xu2015robustness}).  
Several variants have been proposed including exponential CBFs, adaptive and robust formulations, and higher-order extensions \citep{nguyen2016exponential,lopez}. 

The objective of generalizing CBFs to mechanical systems whose dynamics evolve on manifolds is present in much lower research effort.
%
%
An early approach is presented in \cite{wu2016safety}, which introduces a geometric construction for fully actuated mechanical systems by combining reciprocal barrier functions with geometric Control Lyapunov Functions. This method relies on reciprocal CBFs, which are not defined outside the safe set and therefore impose structural limitations on the resulting safety filter.
More recently, \cite{wang2025geometric} proposed a barrier function defined directly on the special Euclidean group $SE(3)$ by formulating the CBF conditions within a Riemannian framework. This approach enables multi-agent obstacle avoidance but the barrier functions depend only on configuration variables, without incorporation of velocity-dependent safety conditions. 
Extremely recently, in \cite{de2025bundles} the authors presented a new formulations for CBFs on Riemannian manifolds. In contrast to this work, we focus specifically on Lie groups and employ tools tailored to them.
%

The line of research conducted on euclidean spaces partly inspires this work. In particular, the approach presented in \cite{singletary2021safety} augments configuration-based barrier functions with kinetic-energy terms, thereby extending safety guarantees from the configuration level to the full dynamical setting.

As main contribution, we introduce and theoretically validate two classes of zeroing CBFs on Lie Groups. The first enforces kinematic constraints, suitable for implementing obstacle avoidance algorithms (the motivation for this class of CBFs mimics that in \cite{singletary2021safety,de2025bundles}, though their framework is not based on Lie groups). The second enforces kinetic energy limits along prescribed inertial-frame directions, relevant for ensuring safe physical interaction along rigid body motion tasks. This constitutes a new class of CBFs, both in technique and in motivation.

Numerical simulations validate both tasks of geometric obstacle avoidance and directional kinetic energy limitation in dynamically challenging scenarios.

The remainder of this document is structured as follows. Section \ref{sec:background} briefly reviews the geometric framework for mechanical systems evolving on a Lie groups and the theoretical foundations of CBFs. Section \ref{sec:cbf_lie_group} introduces and validates the proposed CBFs. Section \ref{sec:simulations} illustrates two distinct use cases of the proposed method through numerical simulations. Section \ref{sec:conclusions}
concludes the paper.


\section{Background} 

\subsubsection*{Notation.} We indicate with $G$ and $\mathfrak{g}$ respectively a finite-dimensional Lie group and its Lie algebra, identified with the tangent space at the identity $\mathfrak{g}=T_eG$, with $e \in G$ the identity group element. The dual vector space to $\mathfrak{g}$ is denoted $\mathfrak{g}^*$. For $g\in G$ and $\xi \in \mathfrak{g}$, $\text{Ad}_\xi: G \to G$ and $ \text{ad}_\xi: \mathfrak{g} \to \mathfrak{g}$ are respectively the group adjoint and algebra adjoint operators, and $\text{ad}^*_\xi: \mathfrak{g} \to \mathfrak{g}^*$ is the dual operator to $\text{ad}_\xi$. We consider matrix representations of $G$, i.e., an $n$-dimensional Lie group $G$ contains elements $g$ which are matrices (e.g, the identity matrix). The algebra element $\xi \in \mathfrak{g}\simeq \mathbb{R}^n$ admits a matrix representation $\xi ^{\wedge}$ of the same dimension of the matrix representation for $g\in G$, and we will make use of the canonical isomorphisms between the two representations. As relevant examples used in this work, for $\omega \in \mathfrak{so}(3)\simeq\mathbb{R} ^3$, $\omega^\wedge=-(\omega^{\wedge})^{\top}\in \mathbb{R}^{3 \times 3}$ indicates the skew-symmetric representation of the $3$-dimensional vector $\omega$.
Given a differentiable scalar function $h(x)$ we indicate with $\partial_x h$ its gradient covector, whose components are stored in a $n$-dimensional column, with $n$ the dimension of the manifold to which $x$ belongs. Given a matrix-valued field $w(x)\in \mathbb{R}^{n \times m}$, we indicate with $L_wh(x):=(\partial _xh)^{\top} w \in \mathbb{R}$ the Lie derivative of $h(x)$ in direction $w$. Finally $\textrm{skew}(\cdot)$ and $\textrm{Tr}(\cdot)$ produce the skew symmetric part and the trace of square matrices.

\label{sec:background}
\subsection{Mechanical Systems on Lie Groups}
The equations of motion for a fully actuated system evolving on an $n$-dimensional Lie group \(G\) are first introduced.
Let $\mathbb{I}:\mathfrak{g} \times \mathfrak{g}\to \mathbb{R}$ be an inertia tensor inducing a left-invariant metric on $G$, represented by a symmetric $\mathbb{R}^{n \times n}$ matrix. Let $u\in \mathfrak{g}^*\simeq \mathbb{R}^n$ be the body-fixed forces, $\eG\in G$ the configuration of the system and $\xi\in\mathfrak{g}\simeq \mathbb{R}^n$ the body-frame velocity. The evolution of the system can then be described by the following pair of Euler–Poincaré equations \citep{bullo1999tracking}:
\begin{equation}
    \label{eq:Lie group dynamics}
    \dot{\eG} =\eG \xi^{\wedge} \qquad \mathbb{I}\dot{\xi}= u+\text{ad}^*_\xi\mathbb{I}\xi
\end{equation}
where we recall that $g$ and $\xi^{\wedge}$ are represented by square matrices. The inertia tensor $\mathbb{I}$ characterizes the configuration-independent \textit{kinetic energy} of (\ref{eq:Lie group dynamics}), defined as
$E(\xi) = \frac{1}{2} \xi^{\top} \mathbb{I}\xi$. By introducing the state variable $x = (\eG,\xi)$, the dynamical equation \eqref{eq:Lie group dynamics} can be equivalently expressed in the control-affine form $\dot{x} = f(x) + \ug(x) u$
where:
\begin{equation}
    \label{eq: affine-control maps for Lie groups}
    f(x) = (\eG \xi^\wedge,\, \mathbb{I}^{-1} \text{ad}^*_\xi \mathbb{I}\xi), \qquad
    s(x) = (0,\, \mathbb{I}^{-1}).
\end{equation} 
%

\subsection{Control Barrier Functions }
Consider a control–affine system:
    \begin{equation}
        \label{eq:affine system}
        \dot x \;=\; f(x) + \ug(x)u, \;\;x\in D\subset\mathbb{R}^n,\; u\in U\subseteq\mathbb{R}^m
    \end{equation} 
    with $f(x),\ug(x)$ locally-Lipschitz functions.
    \\
\begin{definition} 
 A safe-set is given as the $0$–superlevel of a differentiable function $h:D \to \mathbb{R}$:
    \begin{equation}
        \label{eq: Safe-set def. with boundary and int}
        \begin{aligned}
            \mathcal{S} \;&=\; \{\,x\in D : h(x)\ge 0\,\}\\
            \partial\mathcal{S} \;&=\;\{\,x \in D : h(x)= 0\,\}\\
            \mathring{\mathcal{S}} \;&=\;\{\,x \in D: h(x)> 0\,\}
        \end{aligned}
    \end{equation}
where $\partial\mathcal{S}$ and $\mathring{\mathcal{S}}$ represent respectively the boundary and the interior set of $\mathcal{S}$.
\end{definition}

The goal of control barrier functions (CBFs) is to make the safe set $\mathcal{S}$ control invariant, i.e.,
\begin{equation}
    \forall x(0) \in \mathcal{S} \implies x(t)\in \mathcal{S} \,\,\, \forall t>0.
\end{equation}

In this work, we focus specifically on \textit{zeroing} CBFs, which are defined as follows: 
\\
\begin{definition} 
\label{def: CBF classical case}
    A smooth function $h(x):D\to\mathbb{R}$ with safe set $\mathcal{S}$ is a zeroing CBF on a domain $D^\prime$ with $\mathcal{S}\subseteq D^\prime\subseteq D$ if  $\partial_xh(x)\neq 0$ on $\partial \mathcal{S}$ and there exists an extended class-$\mathcal{K}$ function $\alpha$\footnote{A function $\alpha: (-b,a) \to (- \infty, \infty)$ with $a,b>0$, which is continuous, strictly increasing, and $\alpha(0)=0$.}such that, for all $x\in D^\prime$:
    \begin{equation}
        \sup_{u\in U}\;\bigl[L_f h(x) + L_\ug h(x)\,u \bigr] \;\ge\; -\,\alpha\!\bigl(h(x)\bigr).
        \label{eq:cbf-inequality}
    \end{equation}
\end{definition}

Zeroing CBFs differ from another widely studied class of CBFs, referred to as \textit{reciprocal} CBFs \citep{ames2016control}, which diverge at $\partial \mathcal{S}$ and have restricted applicability of operation within the interior of the safe set. Zeroing CBFs instead are well defined also outside of the safe set, i.e., on the whole set $D' \supseteq \mathcal{S}$. For zeroing CBFs the closed loop system, beyond invariance, achieves set stability for the safe set, as reported in the following result. 
\\
\begin{proposition}(\cite{ames2016control,xu2015robustness}).
    \label{prop: solution of a QP for safety}
    Let $h(x)$ be a zeroing CBF on $D'$ as in Definition \ref{def: CBF classical case}. Any Lipschitz continuous controller $u:D\to U$ satisfying \eqref{eq:cbf-inequality} renders $\mathcal{S}$ forward invariant. Furthermore, $\mathcal{S}$ is asymptotically stable on $D'$.
\end{proposition}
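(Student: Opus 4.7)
The plan is to reduce the claim to a scalar comparison argument on the quantity $t \mapsto h(x(t))$ along the closed-loop trajectory. Since $u$ is locally Lipschitz and $f,\ug$ are locally Lipschitz, the closed-loop vector field $f(x)+\ug(x)u(x)$ is locally Lipschitz, so the Cauchy problem admits a unique solution $x(t)$ for every $x(0) \in D'$ on some maximal interval. Along such a solution, evaluating \eqref{eq:cbf-inequality} at $u(x(t))$ yields the scalar differential inequality
\begin{equation*}
\dot h(x(t)) \;\ge\; -\alpha\bigl(h(x(t))\bigr)
\end{equation*}
for as long as $x(t)\in D'$.

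For forward invariance of $\mathcal{S}$, I would introduce the scalar comparison ODE $\dot y = -\alpha(y)$ with $y(0) = h(x(0)) \ge 0$. Since $\alpha$ is continuous with $\alpha(0)=0$, $y\equiv 0$ is an equilibrium and any solution starting at $y(0)\ge 0$ remains nonnegative. A standard comparison lemma for scalar differential inequalities then gives $h(x(t)) \ge y(t) \ge 0$ for all $t$ in the existence interval, which is precisely forward invariance of $\mathcal{S}$.

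For asymptotic stability on $D'$, I would repeat the construction for $x(0)\in D'\setminus\mathcal{S}$, so that $h(x(0))<0$. The strict monotonicity of $\alpha$ together with $\alpha(0)=0$ makes $0$ the only equilibrium of $\dot y = -\alpha(y)$ on the relevant interval and yields $\dot y > 0$ whenever $y<0$, so $y(t)\to 0$ monotonically as $t\to\infty$. Comparison again gives $h(x(t)) \ge y(t) \to 0$, while the invariance established above prevents $h$ from becoming negative once it reaches zero; hence $h(x(t))\to 0$. Lyapunov stability of $\mathcal{S}$ follows from the same estimate applied to initial data with $|h(x(0))|$ small.

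The delicate step is not the comparison lemma itself, which is routine, but ensuring that the closed-loop trajectory $x(t)$ stays inside the domain $D'$ on which the CBF inequality has been assumed, so that the differential inequality $\dot h \ge -\alpha(h)$ remains valid along the entire solution rather than merely locally. One would therefore need an implicit completeness or compactness hypothesis, typically on sublevel sets of $h$ inside $D'$, to exclude finite-time escape and promote the pointwise comparison estimate to a global-in-time statement.
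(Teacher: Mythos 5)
This proposition is quoted from the literature and the paper supplies no proof of its own (it simply cites \cite{ames2016control,xu2015robustness}); your comparison-lemma argument on $\dot h \ge -\alpha(h)$ versus the scalar ODE $\dot y = -\alpha(y)$ is essentially the standard proof given in those references, both for forward invariance and for asymptotic stability of $\mathcal{S}$ (measured through $h$). The argument is correct, and the caveat you raise about forward completeness and the trajectory remaining in $D'$ is precisely the implicit assumption made in the cited works, so nothing essential is missing.
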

The way CBFs are used in practice is to employ them as safety filters that minimally modify a desired control input $u_\mathrm{des}(x,t)$ to ensure system safety, i.e., control invariance. This approach can be formalized as a Quadratic Program (QP):
\begin{equation*}
    \begin{aligned}
        u^\star(x,t) &\;=\; \arg\min_{u\in U}\;\|u-u_{\mathrm{des}}(x,t)\|^2\\
        &\text{s.t.}\;\; L_f h(x) + L_\ug h(x)\,u \;\ge\; -\,\alpha\!\bigl(h(x)\bigr),
    \end{aligned}
    \label{eq:cbf-qp}
\end{equation*}

where the feasibility of this QP is crucial for achieving control invariance of the safe set encoded by the candidate CBFs.
It is important to mention that candidate CBF must be of relative degree one, as otherwise the control input $u$ would not explicitly enter the constraint of the QP \citep{ames2019control}. This condition represents a restrictive assumption that is frequently not satisfied. In this work we will introduce classes of CBFs which structurally overcome this limitation.


\section{Safety Filters on Lie Groups} \label{sec:cbf_lie_group}

This section introduces a framework for constructing CBFs for mechanical systems evolving on a Lie group $G$ governed by the dynamics \eqref{eq:Lie group dynamics}. When a smooth function $h(x)$ depends only on the $g$ component of the state in \eqref{eq:Lie group dynamics}, we refer to it as a \textit{kinematic} function. As a matter of fact, kinematic functions are of relative degree $2$, and as such cannot be directly used as candidate CBFs.

Two specific scenarios will be analyzed in detail: (i) obstacle avoidance via configuration-dependent CBFs (a geometric adaptation of the strategy developed in \cite{singletary2021safety}) and (ii) directional energy regulation along a prescribed world-frame direction (a newly introduced class of zeroing CBFs).
\\
\begin{definition}
    \label{def:ECBF}
    Let $h(\eG): G\to\mathbb{R}$ be a kinematic smooth barrier function, and let $E(\xi)$ denote the kinetic energy of the system \eqref{eq:Lie group dynamics}. 
    For a design gain $\alpha_e>0$,  the energy-augmented barrier function is defined as:
    \begin{equation}
        H(x) := \alpha_e\, h(\eG)\;-\;E(\xi),
        \;\;\;\;x = (\eG,\xi)\in G\times\mathfrak{g}
        \label{eq:H-definition}
    \end{equation} 
    The associated energy-augmented safe set is:
    \begin{equation}
    \label{H safe set}
        \mathcal{S}_E
        :=\bigl\{x:H(x)\ge 0\bigr\}
        =\bigl\{x: E(\xi)\le \alpha_e\,h(\eG)\bigr\}.
    \end{equation}
\end{definition}
%

This formulation constitutes a direct use of the framework introduced by \cite{singletary2021safety} for robotic manipulators, here adapted for Lie groups. Using similar arguments as in that work, we demonstrate that the function defined in \eqref{eq:H-definition} indeed qualifies as a CBF. To this end, we first rigorously characterize the connection between the sets $\mathcal{S}_E$ and $\mathcal{S}$.
\\
\begin{proposition}
\label{prop:aug_safe_set}
    Let $h(\eG): G\rightarrow\mathbb{R}$ be a kinematic smooth barrier function with corresponding safe set $\mathcal{S}=\{g\in G: h(g)\geq 0 \}$ and  $H(\eG,\xi)=\alpha_eh(\eG)-E(\xi)$ be the associated energy-augmented barrier function \eqref{eq:H-definition} with corresponding safe set $\mathcal{S}_E$. Then,
    \begin{equation*}
        \text{(i) }\mathcal{S}_E \subset \mathcal{S}, \qquad\text{(ii) } \mathring{\mathcal{S}}\subset \lim_{\alpha_e\to\infty}\mathcal{S}_E\subset\mathcal{S}
    \end{equation*}
\end{proposition}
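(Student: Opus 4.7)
The plan is to exploit two basic facts: the kinetic energy $E(\xi)=\tfrac{1}{2}\xi^\top\mathbb{I}\xi$ is nonnegative because $\mathbb{I}$ is positive definite, and the family of sets $\{\mathcal{S}_E\}_{\alpha_e>0}$ is monotonically increasing in $\alpha_e$ wherever $h(g)>0$. Throughout, I will read the inclusions with the usual identification in which $\mathcal{S}\subset G$ is lifted to $\mathcal{S}\times\mathfrak{g}\subset G\times\mathfrak{g}$, so that both sides of the containments live in the state space.

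For (i), I would take an arbitrary $x=(g,\xi)\in\mathcal{S}_E$. By definition of $\mathcal{S}_E$ one has $\alpha_e h(g)\geq E(\xi)$, and since $E(\xi)\geq 0$ and $\alpha_e>0$ this forces $h(g)\geq 0$, hence $g\in\mathcal{S}$. To see that the inclusion is strict, I would exhibit a point in $\mathcal{S}\setminus\mathcal{S}_E$: pick any $g\in\partial\mathcal{S}$ (so $h(g)=0$) together with any $\xi\neq 0$, which yields $E(\xi)>0=\alpha_e h(g)$ and so $(g,\xi)\notin\mathcal{S}_E$ even though $g\in\mathcal{S}$.

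For (ii), the rightmost inclusion $\lim_{\alpha_e\to\infty}\mathcal{S}_E\subset\mathcal{S}$ is immediate from (i), applied uniformly in $\alpha_e$. For the left inclusion, I would interpret the limit as the monotone union $\bigcup_{\alpha_e>0}\mathcal{S}_E$: indeed, if $\alpha_e\leq\alpha_e'$ then $\alpha_e h(g)\leq\alpha_e' h(g)$ whenever $h(g)\geq 0$, giving $\mathcal{S}_E\subset\mathcal{S}_{E}'$. Now fix $g\in\mathring{\mathcal{S}}$ and any $\xi\in\mathfrak{g}$; since $h(g)>0$, choosing any $\alpha_e>E(\xi)/h(g)$ yields $\alpha_e h(g)>E(\xi)$, so $(g,\xi)\in\mathcal{S}_E$ for that $\alpha_e$, and therefore $(g,\xi)$ belongs to the union.

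The only genuinely delicate point is making the notion of $\lim_{\alpha_e\to\infty}\mathcal{S}_E$ precise; once one commits to the monotone-union interpretation (justified by the observation above), the argument is essentially just the two-line chain of inequalities. The rest is bookkeeping between $G$ and $G\times\mathfrak{g}$, and noting that the strictness in (i) is witnessed by any boundary configuration equipped with nonzero body velocity.
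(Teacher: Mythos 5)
Your proof is correct and takes essentially the same route as the paper: part (i) is the same two-line argument from $E(\xi)\ge 0$, and part (ii) rests, as in the paper, on the monotone growth of $\mathcal{S}_E$ in $\alpha_e$ with the limit read as the resulting union. Your explicit pointwise choice $\alpha_e > E(\xi)/h(\eG)$ even sidesteps the $\lambda_{\min}(\mathbb{I})$-based auxiliary sets the paper introduces, making the limit statement more precise than the original argument.
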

\begin{proof}
    (i) If $(\eG,\xi)\in\mathcal{S}_E$, then $H(\eG,\xi)=\alpha_eh(\eG)-E(\xi)\ge 0$, hence $\alpha_e h(\eG)\ge E(\xi)\ge 0$, which implies $\eG\in\mathcal{S}$. (ii) We can observe that $\mathcal{S}_E^i :=\{(\eG,\xi)\in G\times\mathfrak{g}: h(\eG)\geq \frac{1}{2}\frac{\lambda_\text{min}(\mathbb{I})}{\alpha_e^i}||\xi||^2\}\subset\mathcal{S}^{i+1}_E \subset \mathcal{S}$, where $\lambda$ is the minimum eigen-value of the matrix representation of the inertia tensor $\mathbb{I}$ and $\alpha_e^i<\alpha_e^{i+1}, i \in \mathbb{N}$. Since $E(\xi)/\alpha_e^i \to 0$ for $i \to \infty$ we have $\mathring{\mathcal{S}}\subset \lim_{i \to\infty}\mathcal{S}_E^i$.  
\end{proof}
From practical perspective, this result states that for a large enough $\alpha_e$ the safe set $\mathcal{S}_E$ (encoded by a CBF candidate with relative degree 1) can be used to achieve invariance of $\mathcal{S}$.
We next establish that the energy-augmented barrier function renders the set $\mathcal{S}_E$ forward invariant, consequently certifying \eqref{eq:H-definition} as a valid CBF.
\\
\begin{lemma}\label{lem:CBF} 
    Consider the system \eqref{eq:Lie group dynamics}, and a kinematic smooth barrier function $h(\eG):G\to\mathbb{R}$ with the associated safe-set $\mathcal{S}=\bigr\{\eG\in G:h(q)\geq0\bigr\}$ such that $\partial_g h(\eG)\neq 0$ on $\partial \mathcal{S}$. Let $H(\eG,\xi)$ be the energy-augmented barrier function \eqref{eq:H-definition} with the corresponding safe-set $\mathcal{S}_E$ \eqref{H safe set}. Then $H(\eG,\xi)$ is a CBF on $\mathcal{S}_E$ and, given a desired nominal controller $u_\text{des}(\eG,\xi)$, the controller: 
    \begin{equation*} 
        \label{eq: Lie Group QP} 
        \begin{aligned} 
            u^\star(\eG,\xi) = &\arg \min_{u}||u-u_\text{des}||^2\\ &\text{s.t.}\;\;\xi^Tu \leq \alpha_e\,L_fh(\eG) + \alpha(H(\eG,\xi))
        \end{aligned} 
    \end{equation*} 
 guarantees the forward invariance of $\mathcal{S}_E,\;\;\forall (\eG,\xi) \in \mathcal{S}_E$. 
\end{lemma}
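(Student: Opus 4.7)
The plan is to verify Definition~\ref{def: CBF classical case} directly for $H$ on the augmented state space $x=(g,\xi)$, and then invoke Proposition~\ref{prop: solution of a QP for safety} to conclude forward invariance of $\mathcal{S}_E$. Since $h$ is kinematic and the control enters only through the $\xi$-dynamics, the computation of $\dot H$ splits cleanly into a drift contribution from $L_f h(g)$ and an input-dependent contribution coming from $\dot E$.

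First I would compute $\dot H = \alpha_e\, L_f h(g) - \dot E(\xi)$. The $\dot E$ computation is the standard Euler--Poincar\'e energy balance: $\dot E = \xi^\top \mathbb{I}\dot\xi = \xi^\top u + \xi^\top \mathrm{ad}^*_\xi \mathbb{I}\xi$. The key algebraic step is the identity $\xi^\top \mathrm{ad}^*_\xi \mathbb{I}\xi = 0$, which follows from the duality $\langle \mathrm{ad}^*_\xi \mu, \eta\rangle = \langle \mu, \mathrm{ad}_\xi \eta\rangle$ with $\mu=\mathbb{I}\xi$, $\eta=\xi$, together with $\mathrm{ad}_\xi \xi=[\xi,\xi]=0$. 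This yields $\dot H = \alpha_e L_f h(g) - \xi^\top u$, which identifies $L_f H = \alpha_e L_f h$ and $L_s H \cdot u = -\xi^\top u$. The standard CBF inequality $L_f H + L_s H\,u \geq -\alpha(H)$ then rearranges exactly to the QP constraint $\xi^\top u \leq \alpha_e L_f h(g) + \alpha(H(g,\xi))$, so the lemma's QP is the canonical CBF safety filter associated with $H$.

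Next I would check the two requirements of Definition~\ref{def: CBF classical case}: that $\partial_x H \neq 0$ on $\partial\mathcal{S}_E$ and that the inequality is pointwise feasible in $u$. For the gradient, $\partial_x H = (\alpha_e \partial_g h,\, -\mathbb{I}\xi)$; if this vanishes at some $x \in \partial\mathcal{S}_E$, then $\xi=0$ forces $E(\xi)=0$, so $H=0$ gives $h(g)=0$, i.e.\ $g\in\partial\mathcal{S}$, contradicting the hypothesis $\partial_g h\neq 0$ on $\partial\mathcal{S}$. For feasibility, the case $\xi\neq 0$ is immediate because the coefficient $-\xi^\top$ of $u$ is nonzero and the admissible $u\in\mathfrak{g}^*$ is unconstrained. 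The case $\xi=0$ looks delicate, but it collapses: $\dot g = g\xi^\wedge = 0$ forces $L_f h(g) = 0$, so the constraint reduces to $0 \leq \alpha(H(g,0))$, which holds on $\mathcal{S}_E$ since $H\geq 0$ and $\alpha$ is an extended class-$\mathcal{K}$ function.

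With $H$ certified as a valid zeroing CBF on $\mathcal{S}_E$, the QP solution $u^\star(g,\xi)$ is (under the usual regularity conditions) Lipschitz continuous and satisfies the CBF inequality by construction, so Proposition~\ref{prop: solution of a QP for safety} yields forward invariance of $\mathcal{S}_E$. The main obstacle in this argument is the algebraic identity $\xi^\top \mathrm{ad}^*_\xi \mathbb{I}\xi = 0$: once it is in hand, the relative-degree-two obstruction of the purely kinematic $h(g)$ is structurally removed by the energy augmentation, and the remainder of the proof is essentially bookkeeping on the CBF definition.
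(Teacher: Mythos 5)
Your proposal is correct and follows essentially the same route as the paper's proof: compute $L_fH=\alpha_e L_f h$ via the identity $\xi^\top \mathrm{ad}^*_\xi\mathbb{I}\xi=0$ and $L_sH=-\xi^\top$, verify the non-vanishing gradient on $\partial\mathcal{S}_E$ and pointwise feasibility (splitting the cases $\xi\neq 0$ and $\xi=0$ exactly as the paper does), and conclude via Proposition~\ref{prop: solution of a QP for safety} with Lipschitz continuity of the QP solution. The only difference is cosmetic: you justify the key cancellation through the duality pairing and $[\xi,\xi]=0$, which the paper simply asserts.
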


\begin{proof} 
    To prove that \eqref{eq:H-definition} is a CBF on \eqref{H safe set}, according to Definition \ref{def: CBF classical case}, we start by noting that $H(\eG,\xi)$ is a smooth function on $G\times\mathfrak{g}$. Then, we rewrite the CBF constraint $L_fH + L_sH\,u \ge -\alpha(H)$ for \eqref{eq: affine-control maps for Lie groups}. In particular, $L_f H(\eG,\xi) = \alpha_e\,L_f h(\eG)$ since $ \xi^\top\operatorname{ad}^*_\xi(\mathbb{I}\xi) = 0$. Furthermore $L_s H(\eG,\xi) = \bigl(\partial_\xi H(\eG,\xi)\bigr)^\top\mathbb{I}^{-1} = (-\mathbb{I}\xi)^\top\mathbb{I}^{-1} = -\,\xi^\top$ given $\partial_g H(\eG,\xi) = \partial_g h(\eG)$, $\partial_\xi H(\eG,\xi) = -\partial_\xi E(\xi) = -\mathbb{I}\xi$. 
    Consequently, the constraint becomes:
    \begin{equation}
        \label{eq:ineq-CBF}
        \xi^\top u \;\le\; \alpha_eL_f h(\eG) + \alpha\bigl(H(\eG,\xi)\bigr),
    \end{equation}  
    which is the one reported in the lemma. Now we prove that $H(g,\xi)$ is a valid CBF.
    Notice that $\partial_xH$ is non-vanishing on $\partial\mathcal{S}_E$. In fact, we note that   $\partial_\xi H(\eG,\xi) \neq 0, \forall\xi\neq 0$ while $\partial_g H(\eG,0)=\partial_g h(\eG)\neq 0$  $\forall\xi=0$ because $\partial_g h(\eG)\neq 0$ on $\partial\mathcal{S}$ by assumption.
    Next, we prove that $\forall(\eG,\xi)\in\mathcal{S}_E, \exists u $ satisfying \eqref{eq:ineq-CBF}: 
    if $\xi\neq 0$, then $L_g H(\eG,\xi)=-\xi^\top\neq 0$ and the inequality \eqref{eq:ineq-CBF} is feasible as it defines a non-empty half-space in the input space; if $\xi=0$, then $\xi^\top u=0$ and 
    $L_f h(\eG) = 0$, since $\dot \eG = \eG\xi^\wedge = 0$. Furthermore, $\alpha(H(\eG,0))\ge 0, \forall(\eG,0)\in\mathcal{S}_E$, since $H(\eG,0)\ge 0$ and $\alpha$ is a class $\mathcal{K}$ function so the inequality holds $\forall u$. Notice that for (\ref{eq:Lie group dynamics}) we have $u\in U=\mathfrak{g}^*\simeq \mathbb{R}^n$, and as such the Lipschitz continuity of the controller resulting from the QP follows from \cite[Section 3]{xu2015robustness}.
    To conclude, the forward invariance of $\mathcal{S}_E$ under the closed–loop dynamics follows from the Proposition \ref{prop: solution of a QP for safety}.
\end{proof}

By combining Lemma \ref{lem:CBF} with Proposition \ref{prop:aug_safe_set}, it follows that the energy-augmented CBF also guarantees the forward invariance of the safe set $\mathcal{S}$, provided that the parameter $\alpha_e$ is chosen sufficiently large.



The kinematic function \(h(g)\) in \eqref{eq:H-definition} can be employed to represent the signed distance to a forbidden region, whereas the energetic augmentation term \(-E(\xi)\) guarantees that the system preserves a relative degree of one. One may also choose a constant $h(g)=E_{\max}/\alpha_e>0$ in \eqref{eq:H-definition} to enforce a desired bound $E_{\max}$ on the total kinetic energy. This choice makes the associated safe set $\mathcal{S}$ equal to the entire Lie group $G$, 
as on a Lie group the kinetic energy is independent on the configuration $g\in G$.
%
In contrast, when the dynamics are expressed on a general mechanical system in terms of generalized coordinates $q$ as in \cite{singletary2021safety}, the inertia metric depends on the configuration, preventing a clear decoupling between directional effects and inertial properties. 

We now introduce a new class of CBFs on Lie groups, representing directional kinetic energy constraints. These CBFs encode safe sets representing limits of kinetic energy in arbitrary directions of the ambient space on which the Lie Group acts. Although the definition (and the following proposition) holds for rigid-body motion ($G=SE(3)$), we present this section using the general formulation (\ref{eq:Lie group dynamics}) for technical compactness in the proofs and to facilitate straightforward specialization to $SO(3)$.
\\
%


\begin{definition} 
    Consider the system \eqref{eq:Lie group dynamics} with  inertia tensor $\mathbb{I}$. The \emph{directional kinetic energy along $n_v, n_\omega$} is defined as:
    \begin{equation}
        \label{eq:Ed_dir}
        E_{n_v, n_\omega}(\eG,\xi)
        :=
        \frac12\,\xi^\top P_{n_v, n_\omega}(\eG)\,\mathbb{I}\,\xi,
    \end{equation} 
    where $n_v, n_\omega \in \mathbb{S}^2$ denotes, respectively, the unit translational and rotational directions in the inertial frame, and $P_{n_v, n_\omega}(\eG)$ indicates a configuration-dependent projection matrix, defined as:
    $$P_{n_v, n_\omega}(\eG)
        :=
        \begin{bmatrix}
        n_{\omega B}(g) n_{\omega B}(g)^\top & \mathbf{0}\\[2pt]
        \mathbf{0} & n_{v B}(g) n_{v B}(g)^\top
        \end{bmatrix}
        \in \mathbb{GL}(6),$$  
    with $n_{v B}(g)$ and $n_{\omega B}(g)$, representing the corresponding unit directions in the body-fixed reference frame and $\mathbb{GL}(6)$ indicates the group of non singular $\mathbb{R}^{n \times n}$ matrices.
\end{definition}

$E_{n_v, n_\omega}(g,\xi)$ represents the sum of kinetic energies along the translational direction of the inertial frame $n_v$ and along the rotational axis defined by $n_\omega$, also in the inertial frame. We now construct a CBF that limits the kinetic energy along $n_v$ and $n_\omega$ to a maximum amount $E_{\max}>0$.
\\
\begin{proposition}  \label{prop:dir_energy_CBF}
    Given a desired energy bound $E_{\max}>0$ and a directional kinetic energy \eqref{eq:Ed_dir}, the function: 
    \begin{equation} \label{eq: H_dir} 
        H_{n_v, n_\omega}(\eG,\xi) := E_{\max}-E_{n_v, n_\omega}(\eG,\xi), 
    \end{equation} 
    with safe set: 
    \begin{equation}
    \label{eq: H_dir safeset}
        \mathcal{S}_{n_v, n_\omega} := \Big\{(\eG,\xi)\in G\times\mathfrak{g}: E_{n_v, n_\omega}(\eG,\xi)\le E_{\max}\Big\}
    \end{equation}
    is a (zeroing) CBF for (\ref{eq:Lie group dynamics}).
\end{proposition}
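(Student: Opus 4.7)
The plan is to verify the three defining requirements of a zeroing CBF stated in Definition~\ref{def: CBF classical case} for $H_{n_v,n_\omega}$: smoothness, non-vanishing gradient on $\partial\mathcal{S}_{n_v,n_\omega}$, and existence of an extended class-$\mathcal{K}$ function $\alpha$ for which the CBF inequality is feasible. The overall architecture parallels the proof of Lemma~\ref{lem:CBF}, with the extra ingredient being the configuration dependence of the projector $P(g)$ through $n_{vB}(g)=R^\top n_v$ and $n_{\omega B}(g)=R^\top n_\omega$, which has to be handled when computing the drift Lie derivative.

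First I would observe that smoothness is immediate, since $P(g)$ depends smoothly on $g$ via the rotational component $R$, while $\xi\mapsto\tfrac12\xi^\top P(g)\mathbb{I}\xi$ is polynomial in $\xi$. Next I would compute
\[
\partial_\xi H_{n_v,n_\omega}(g,\xi) \;=\; -\tfrac12\bigl(P(g)\mathbb{I}+\mathbb{I}P(g)\bigr)\xi.
\]
On $\partial\mathcal{S}_{n_v,n_\omega}$ we have $E_{n_v,n_\omega}=E_{\max}>0$, which rules out $\xi=0$ and, since $\xi^\top(P\mathbb{I}+\mathbb{I}P)\xi=2\xi^\top P\mathbb{I}\xi\neq 0$, also rules out $(P\mathbb{I}+\mathbb{I}P)\xi=0$. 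Thus $\partial_\xi H_{n_v,n_\omega}\neq 0$ on the boundary, giving the non-degeneracy requirement.

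For the CBF inequality I would explicitly compute $L_s H_{n_v,n_\omega}=(\partial_\xi H_{n_v,n_\omega})^\top\mathbb{I}^{-1}$, which by the previous step is nonzero on $\partial\mathcal{S}_{n_v,n_\omega}$, so the inequality $L_f H_{n_v,n_\omega}+L_s H_{n_v,n_\omega}u\ge -\alpha(H_{n_v,n_\omega})$ defines a non-empty half-space in $u$ and is feasible there and on a neighborhood by continuity. The drift term $L_f H_{n_v,n_\omega}$ is obtained by combining $\dot g=g\xi^\wedge$ (which yields $\dot n_{vB}=-\omega^\wedge n_{vB}$ and analogously for $n_{\omega B}$) with the drift component $\dot\xi=\mathbb{I}^{-1}\text{ad}^*_\xi\mathbb{I}\xi$ of \eqref{eq:Lie group dynamics}, producing a closed-form expression independent of $u$.

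The main obstacle is handling the set $\mathcal{N}=\{(g,\xi):(P(g)\mathbb{I}+\mathbb{I}P(g))\xi=0\}$ where $L_s H_{n_v,n_\omega}=0$. Points in $\mathcal{N}$ necessarily satisfy $E_{n_v,n_\omega}=0$, hence $H_{n_v,n_\omega}=E_{\max}>0$, so they lie strictly in the interior of $\mathcal{S}_{n_v,n_\omega}$ and the CBF inequality must hold without control. I would argue that on $\mathcal{N}$ the $\dot\xi$-driven part of $L_f H_{n_v,n_\omega}$ vanishes (because $\partial_\xi H_{n_v,n_\omega}=0$ there), leaving only the configuration-dependent contribution from $\dot P$, and then pick an extended class-$\mathcal{K}$ function $\alpha$ whose value $\alpha(E_{\max})$ dominates this residual term uniformly on compact subsets. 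Lipschitz continuity of the resulting QP-based safety filter then follows from \cite{xu2015robustness} exactly as in Lemma~\ref{lem:CBF}.
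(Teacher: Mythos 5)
Your overall architecture coincides with the paper's: both proofs verify Definition~\ref{def: CBF classical case} directly, compute $L_sH_{n_v,n_\omega}$ and $L_fH_{n_v,n_\omega}$ while accounting for the configuration dependence of the projector through $\dot n_{vB}=-\omega^\wedge n_{vB}$, $\dot n_{\omega B}=-\omega^\wedge n_{\omega B}$ (the paper packages this as $\dot P_{n_v,n_\omega}=[P_{n_v,n_\omega},\Omega(\xi)]$), and argue feasibility via a non-empty half-space whenever the input coefficient is nonzero; the paper then closes by invoking ``the same arguments as in Lemma~\ref{lem:CBF}'' together with $\partial(SE(3))=\varnothing$, so your explicit check that $\partial_\xi H_{n_v,n_\omega}\neq 0$ on $\partial\mathcal{S}_{n_v,n_\omega}$ is, if anything, more detailed than the paper's.

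The genuine gap is in your last step on the degenerate set $\mathcal{N}=\{(g,\xi):(P\mathbb{I}+\mathbb{I}P)\xi=0\}$. Definition~\ref{def: CBF classical case} requires the inequality to hold pointwise at every state of the domain, and $\mathcal{N}$ is unbounded in $\xi$ (any twist with $n_{vB}^\top v=0$ and $n_{\omega B}^\top\omega=0$ belongs to it, with arbitrarily large norm), while the residual drift contribution $-\tfrac12\xi^\top\dot P_{n_v,n_\omega}\mathbb{I}\xi$ is cubic in $\xi$ because $\dot P_{n_v,n_\omega}$ is linear in $\omega$. Hence ``picking $\alpha$ whose value $\alpha(E_{\max})$ dominates the residual uniformly on compact subsets'' cannot close the argument: once $\alpha$ is fixed, $\alpha(E_{\max})$ is a single number and cannot dominate an a priori unbounded function on a non-compact set, and restricting to compacts does not give the pointwise condition the definition demands. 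What actually saves the statement is that the residual vanishes identically on $\mathcal{N}$: the block structure of $(P\mathbb{I}+\mathbb{I}P)\xi=0$ forces $n_{vB}^\top v=0$ and $n_{\omega B}^\top\omega=0$, hence $\xi^\top P_{n_v,n_\omega}=0$, so $\xi^\top P_{n_v,n_\omega}\operatorname{ad}^*_\xi\mathbb{I}\xi=0$ and $\xi^\top[P_{n_v,n_\omega},\Omega]\mathbb{I}\xi=-\xi^\top\Omega P_{n_v,n_\omega}\mathbb{I}\xi=-m\,(v^\top\omega^\wedge n_{vB})(n_{vB}^\top v)=0$ (the rotational block drops out since $\omega^\top\omega^\wedge=0$). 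The constraint on $\mathcal{N}$ then reduces to $0\le\alpha(E_{\max})$, which holds for any extended class-$\mathcal{K}$ function; this vanishing must be proved rather than replaced by a domination argument, and with it your proof is complete and aligned with the paper's.
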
 
\begin{proof}
    The constraint \eqref{eq:cbf-inequality} in this case is $\dot{H}_{n_v, n_\omega}+\alpha(H_{n_v, n_\omega})\geq0$, where:
    \begin{equation*}
        \begin{aligned}
            &\dot{H}_{n_v, n_\omega} = -\dot{E}_{n_v, n_\omega} 
            = -(\xi^TP_{n_v, n_\omega}\mathbb{I}\dot{\xi} +\frac{1}{2}\xi^T\dot{P}_{n_v, n_\omega}\mathbb{I}\xi)  \\
            &= -(\xi^TP_{n_v, n_\omega}(u+\operatorname{ad}^*_\xi\mathbb{I}\xi)+\frac{1}{2}\xi^T\dot{P}_{n_v, n_\omega}\mathbb{I}\xi)\\
            &= 
            -\xi^TP_{n_v, n_\omega}u - \xi^T(P_{n_v, n_\omega}\operatorname{ad}^*_\xi\mathbb{I}\xi+\frac{1}{2}\dot{P}_{n_v, n_\omega}\mathbb{I}\xi).
        \end{aligned}
    \end{equation*} 
    The term $\dot{P}_{n_v, n_\omega}$ can be expanded as:
    \[
    \begin{aligned}
       &\dot{P}_{n_v, n_\omega} 
       =  \begin{bmatrix}
           \dot{n}_{\omega B}\,n_{\omega B}^T+n_{\omega B}\,\dot{n}_{\omega B}^T\qquad\mathbf{0}\\
           \mathbf{0}\qquad \dot{n}_{vB}\,n_{vB}^T + n_{vB}\,\dot{n}_{vB}^T
       \end{bmatrix}\\ 
       &= \begin{bmatrix}
           -\omega^\wedge n_{\omega B}\,n_{\omega B}^T + n_{\omega B}\,n_{\omega B}^T\,\omega^\wedge \quad\mathbf{0}\\
           \mathbf{0}\quad -\omega^\wedge n_{vB}\,n_{vB}^T + n_{vB}\,n_{vB}^T \,\omega^\wedge
       \end{bmatrix}  \\
       &=  P_{n_v, n_\omega}\Omega - \Omega\,P_{n_v, n_\omega}   
        = [P_{n_v, n_\omega}(\eG),\,\Omega(\xi)] \\
       &=\operatorname{ad}_{P_{n_v, n_\omega}(\eG)}\Omega(\xi)
    \end{aligned}
    \]
    where we define the adjoint acting on the rotational part of the twist as the diagonal matrix $\Omega(\xi) := \text{diag}(\omega^\wedge, \omega^\wedge)$
    %
    and $\operatorname{ad}_{P_{n_v, n_\omega}(\eG)}:\mathfrak{gl}(6)\to\mathfrak{gl}(6)$ as the Lie bracket of the General Linear Lie group $\mathbb{GL}(6)$ (i.e., the classic matrix commutator).
    Therefore, we obtain:
    \begin{equation*}
        \begin{aligned}
            &\dot{H}_{n_v, n_\omega} = L_{s(x)}H_{n_v, n_\omega}\,u + L_{f(x)}H_{n_v, n_\omega} = \\
            &= -\xi^TP_{n_v, n_\omega}u - \frac{1}{2}\xi^T(2P_{n_v, n_\omega}\operatorname{ad}^*_\xi+\operatorname{ad}_{P_{n_v, n_\omega}}\Omega(\xi))\mathbb{I}\xi\\
            &= -\xi^TP_{n_v, n_\omega}u -\frac{1}{2}\xi^TI(\eG,\xi)\xi.
        \end{aligned}
    \end{equation*}
    where $I(\eG,\xi) := 2P_{n_v, n_\omega}\operatorname{ad}^*_\xi+\operatorname{ad}_{P_{n_v, n_\omega}}\Omega(\xi)$.
    Consequently, the CBF constraint can be written as:
    \begin{equation} \label{eq:world_dir_constraint}
        \xi^\top P_{n_v, n_\omega}(\eG) u \le \alpha(E_{\max}-\frac{1}{2}\xi^TP_{n_v, n_\omega}\mathbb{I}\xi) + \frac{1}{2}\xi^TI(\eG,\xi)\xi.
    \end{equation} 
    By applying the same arguments used in the proof of Lemma \ref{lem:CBF}, and noting that $\partial \mathcal{S} = \partial G=\partial (SE(3))=\varnothing$, the proof is complete. 
\end{proof}

Note that the virtual power term $\frac{1}{2}\xi^T I(\eG,\xi)\xi$ arises specifically from the presence of the time-varying projection matrix $P_{{n_v, n_\omega}}(\eG)$, which induces a nonzero drift term $L_{f(x)}H_{{n_v, n_\omega}}(\eG,\xi)$. This effect is a consequence of selecting a particular directional component of the world-frame and mapping it onto a body-fixed direction.

The Proposition \ref{prop:dir_energy_CBF} ensures that \eqref{eq: H_dir}, hereafter referred to as the \emph{directional energy CBF}, is able to impose an upper bound $E_{\max}$ on the kinetic energy of the closed-loop system along arbitrary world frame directions $n_v, n_\omega \in \mathbb{S}^2$.


\section{Simulations} \label{sec:simulations}

The proposed CBF has been tested in simulation on two specific control problems corresponding to the scenarios analyzed in the previous section: obstacle avoidance using configuration-dependent CBFs, and directional energy limiting along a fixed inertial axis.
%
%
Both simulations consider a rigid body in 3D space, evolving on the Lie group $SE(3)$ with pose $\eG = (R,p) \in SE(3)$, where $R \in SO(3)$ denotes the orientation and $p \in \mathbb{R}^3$ the position. The kinematic and dynamic equations (\ref{eq:Lie group dynamics}) specialize to:
\begin{equation*} \label{eq:SE3 dynamics}
    \begin{cases}
        \dot R = R\,\omega^\wedge,\\
        \dot p = R\,v,\\
        J\,\dot{\omega} = \omega \times J\omega  + f_{\omega},\\
        m\,\dot{v} = \omega \times m\,v + f_v,
    \end{cases}
\end{equation*}
where $\xi=(\omega, v) \in \mathfrak{se}(3)\simeq \mathfrak{so}(3) \times \mathbb{R}^{3}\simeq \mathbb{R}^3 \times \mathbb{R}^3$ is the body-frame \textit{twist}, with $\omega, v \in \mathbb{R}^3$ representing the angular and linear velocities \citep{lynch2017modern}, while $J$ and $m$ are the diagonal rotational inertial and the mass of the rigid body and $f_\omega , f_v \in \mathbb{R}^3$  denote external torque and force in the body frame.

As a nominal controller $u_\text{des}(R,p,\omega,v)=f_P+f_D+f_F$, we adopt the geometric tracking controller introduced in \cite{bullo1999tracking}, which incorporates a proportional–derivative feedback law together with a geometric feedforward term defined, respectively, as follows:\begin{equation*}
    \begin{aligned}
        &f_P = -\begin{bmatrix}
            \textrm{skew}(K_1R_d^TR)^{\vee}\\
            R^T(R+R_d)K_2(R^T+R_d^T)(p-p_d)
        \end{bmatrix}\\ 
        &f_D = -K_d\begin{bmatrix}
            \omega-R^TR_d\omega_d\\
            v - R^TR_d(v_d + \omega_d \times (R^T(p-p_d)))
        \end{bmatrix}\\ 
        &f_{F} = -\text{ad}^{*}_\xi\mathbb{I}Ad_{\eG^{-1} g_d}\xi_d + \mathbb{I}Ad_{\eG^{-1} g_d}\dot{\xi_d}
    \end{aligned} 
\end{equation*}

Here $g_d = (R_d, p_d)$ represents the reference trajectory on $SE(3)$, $\xi_d = (\omega_d, v_d)$ the corresponding reference twist.

%
%
%
In the $SE(3)$ case the two adjoints map $\text{Ad}_g$ and $\text{ad}_\xi$, respectively, on the group and on its algebra $\mathfrak{se}(3)$ admit the following matrix representations in $M_6(\mathbb{R})$:
\begin{equation*}\label{eq: adjoint representations in SE3}
    \operatorname{Ad}_g = 
            \begin{bmatrix}
                R & 0 \\
                p^\wedge R & R
            \end{bmatrix}
    \qquad
    \operatorname{ad}_\xi = 
            \begin{bmatrix}
                \omega^\wedge & 0 \\
                v^\wedge & \omega^\wedge
            \end{bmatrix} = (\operatorname{ad}_\xi^*)^T
\end{equation*}
In the simulation below, the rigid body consists of a disk with radius $r = 3~\mathrm{m}$, mass $m = 3~\mathrm{kg}$, and inertia $J$ = \text{diag}$(J_x, J_x, J_z)$ with $J_x = \frac{1}{4}mr^2$ and $J_z=\frac{1}{2}mr^2$ so that $\mathbb{I}=\textrm{diag}(J,mI_3)$.
Finally, the control gains are $K_1 = 20I_3$, $K_2 = 8I_3$, and $K_d =$ \text{diag}$(0.8I_3, 8I_3)$.

\subsection{Simulation 1: Collision-avoidance}
The first control problem demonstrates the ability of the proposed CBF to ensure safety in rigid-body dynamics during a constrained maneuver. Specifically, a disk-like rigid body must adjust its orientation to pass through multiple narrow slits in sequence with different orientations. 
Each slit is modeled as a pair of parallel planes with outward unit normals $\pm n$ and centers $c_L, c_R$, thereby defining a collision-free corridor of total width $d$. 
We define a function $h_o(R,p):SE(3)\to\mathbb{R}$ as a smooth minimum of the signed distances from the disk to the two planes of the slit:
\begin{equation*}
    h_o(R,p)
    := -\frac{1}{\beta}
      \log\!\left(e^{-\beta \psi_L}+e^{-\beta \psi_R}\right),
\end{equation*}
where $\beta>0$ is a sharpness parameter. 
The functions $\psi_L$ and $\psi_R$, defined as
$\psi_j(R,p) := n^\top(p-c_j) - s(R) - \delta, j=L,R$,  represent the signed distances from the disk to the left and right planes, including a small safety margin $\delta>0$.    
The term $s(R) := r\sqrt{1-(a^\top b)^2}$ measures the disk’s reach along the plane normal as a function of its orientation $R$, where $b$ is the body-fixed normal of the disk and $a = R^\top n$ is the slit’s normal expressed in the body frame. In this way, $s(R)$ acts as a support function along $n$, capturing the disk’s reach as its attitude varies.

The positivity of $h_o(R,p)$ guaranties that the disk does not collide with either plane of the slit. To avoid activating this constraint when the disk is far from the slit, we introduce a smooth weighting function $\chi:\mathbb{R}^3\to\mathbb{R}$ defined as $\chi(p) := \exp\left(-\frac{1}{2} (p-c)^\top \Sigma^{-1} (p-c)\right)$ with $c := \frac{c_L + c_R}{2} + \Delta$. 
The smooth barrier function is then constructed as
\begin{equation*}
    \label{eq: Sim1 def h}
    h(R,p) := (1-\chi(p)) K + \chi(p) h_o(R,p)
\end{equation*}
where $\Sigma = \sigma^2I_3\succ0$ is a matrix defining the obstacle as a sphere of radius $\sigma$, $\Delta$ is a pure translation parameter used a degree of freedom to decide the conservativeness of $h$ and $K>0$ is a constant parameter defining the upper bound of the CBF.

Adding the energetic term as in \eqref{eq:H-definition}, we obtain the energy-augmented CBF:
\begin{equation*} \label{eq: H sim1}
    H(R,p,\xi) = \alpha_eh(R,p)-\frac{1}{2}\xi^T\mathbb{I}\xi
\end{equation*}
Since the environment contains two slits, we define an energy-augmented CBF for each one and enforce safety by solving the following QP with two constraints:
\begin{equation*} \label{eq: QP sim1}
    \begin{aligned}
        u^\star = \arg\min_{u\in\mathbb{R}^{6}} \ &\|u - u_{\mathrm{des}}\|^{2} \\
        \text{s.t.}\quad 
        &\xi^\top u \leq L_f h_1(R,p) + \alpha\!\left(H_1(R,p,\xi)\right)\\
        &\xi^\top u \leq L_f h_2(R,p) + \alpha\!\left(H_2(R,p,\xi)\right)
    \end{aligned}
\end{equation*}
where $h_1,h_2$ and $H_1,H_2$ denote the kinematic and energy-augmented CBF associated with the two slits.

The simulation runs for $15\mathrm{s}$ with slits centered at $c_1 =[2.8, 1.0, 1.6]$ and $c_2 =[2.8, -2.0, 1.6]$, defining a gap of $d = 0.3 \mathrm{m}$. The first slit is vertical, while the second is rotated $45^\circ$ about the $Y$–axis, as shown in Figure \ref{fig:cbf-slit}. The CBF $h$ is parameterized by $\beta=25, \sigma=12, \Delta=\tfrac{1}{2}\mathbf{e}_2, K=\tfrac{\alpha_e}{2}$, while the QP constraint employs a linear class-$\mathcal{K}$ function with coefficient $\alpha=1.0$.

Figure~\ref{fig:cbf-slit} shows a visual representation of the simulation setup, while the temporal evolution of the geometric terms $\alpha_e h_i$ and the energy-augmented CBFs $H_i$ for each slit $i\in\{1,2\}$ is depicted in Figure~\ref{fig:cbf-slit2} for two different values of $\alpha_e>0$. Consistent with the theoretical results, both quantities remain nonnegative over time, thereby enabling the disc to traverse the slits without colliding with the surrounding walls. Safety is ensured for both values of $\alpha_e$. However, larger values of $\alpha_e$ increase the repulsive effect exerted by the boundary of the safe set, leading the controller to preserve a greater margin from the obstacles. In narrow passages, where obstacles are present on both sides, this configuration becomes more susceptible to oscillatory behavior, as the system must continuously adjust its state to comply with the more restrictive safe region.
%



\begin{figure}[t]
    \centering 
    \includegraphics[width=0.8\linewidth]{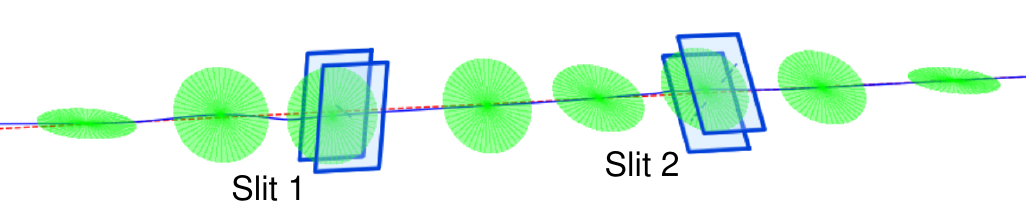} 
    \vspace{-1mm}
    \caption{3D visualization of the obstacle-avoidance simulation employing the propsed CBF with $\alpha_e = 150$.
    Green disks depict the time-varying pose of the rigid body as it tracks the (red dashed) reference trajectory, while the safety filter adjusts  the orientation as required to fulfill the obstacle avoidance objective.    }
    \label{fig:cbf-slit}
\end{figure}

\begin{figure}[t]
    \centering
    \includegraphics[width=\linewidth]{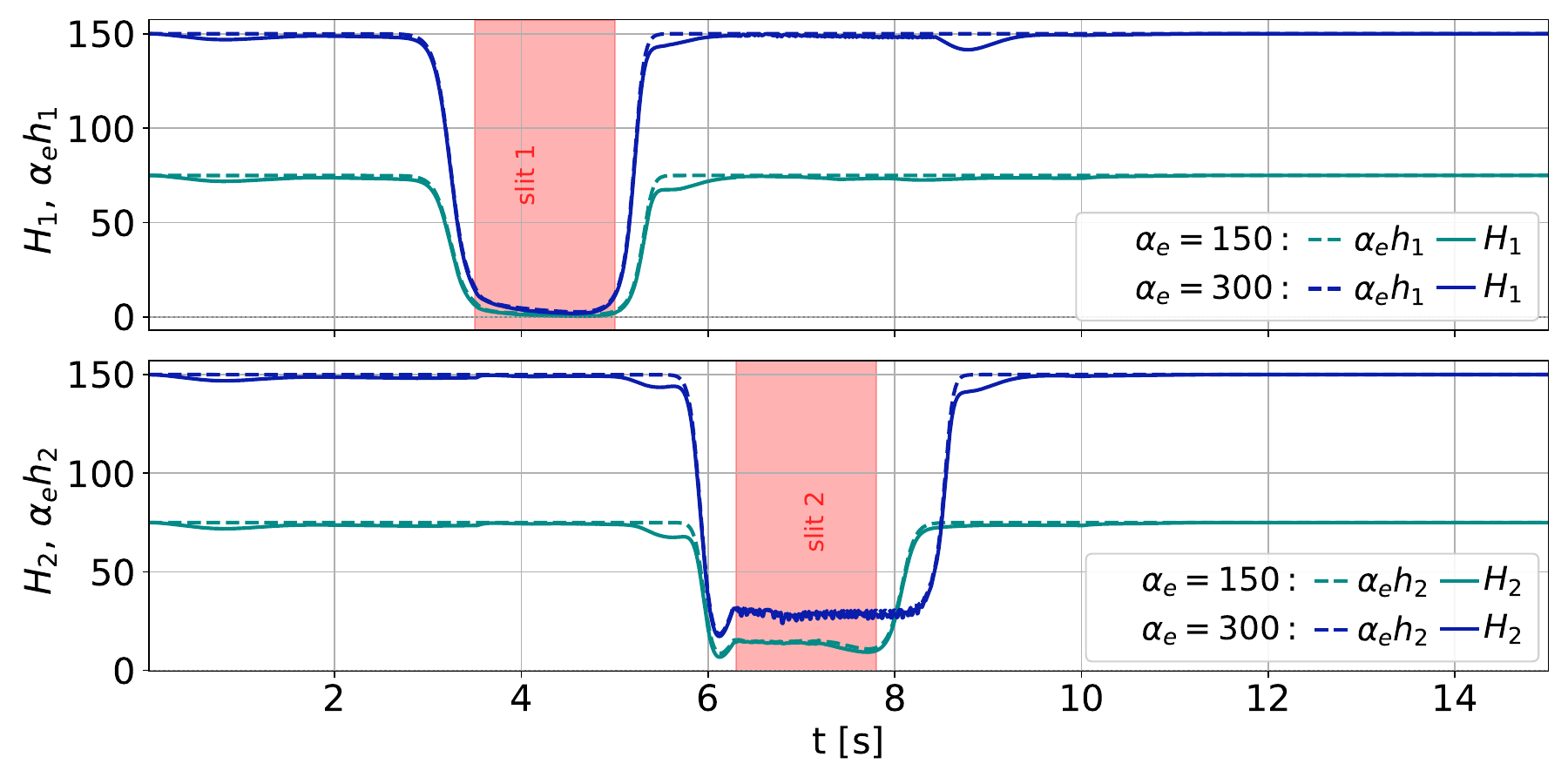}\vspace{-1mm}
    \caption{Temporal evolution of $\alpha_e h_i$ and $H_i$ associated with the $i$-th slit, where $i \in \{1,2\}$.
    Red shaded regions in Figure~\ref{fig:cbf-slit} mark the time intervals during which the disk is within each slit.}  
    \label{fig:cbf-slit2}
\end{figure}

\subsection{Simulation 2: Limiting directional energy}

The second control objective is to impose constraints on the component of kinetic energy projected along a prescribed direction. In practical implementations, this objective can be exploited to attenuate or prevent damage that may arise during physical interactions between the rigid body (e.g., a fully actuated unmanned aerial vehicle) and the environment (e.g., the ground).  
The system starts from $p(0) = [15,0,10]^\top$ and $R(0) = R_x\left(\frac{\pi}{2}\right)$, an undesirable $90^\circ$ tilt about the $x$-axis. The disk must execute a vertical landing on a pad located at the world-frame origin, lying on a plane with normal $n_v = \mathbf{e}_1 \times \mathbf{e}_2$, where $\mathbf{e}_i$ is the $i$-th canonical basis vector in $\mathbb{R}^3$. Within this setup, the goal is to enforce kinetic-energy limits along the pad normal $n_v$ to maintain safe interaction dynamics.
Let $n_{v B} = R^\top n_v$ be the pad normal expressed in the body frame.  
The linear velocity along this direction is $v_n = n_{v B}^\top v$.
The directional kinetic energy is then $E_n= \tfrac12 mv_n^2$. 
The projector is the diagonal matrix $P_{n_{v}}  = \text{diag}(\mathbf{0},n_{v B}n_{v B}^T)\in\mathbb{R}^{6\times 6}$, where, for brevity, the dependence on $R$ has been omitted.
To enforce the desired energy bound \(E_n \le E_{n,\max} = 1.5J\), we employ the CBF introduced in Proposition \ref{prop:dir_energy_CBF}. This is achieved by solving the QP associated with the affine inequality constraint  \eqref{eq:world_dir_constraint}, which can be reformulated as:
\begin{equation*}
    \label{eq: sim3 constraint}
    \begin{aligned}
    (\partial_\xi E_n)^{\top}&I^{-1} u
    \;\le\;
    \alpha\!\left(E_{n,\max} - E_n\right)\, +\\
    & + \frac{1}{2}\xi^T\Biggl(
    \begin{bmatrix}
        \mathbf{0}\qquad\qquad\qquad\qquad\quad\mathbf{0}\\
        \;\,\mathbf{0}\;\;3n_Bn_B^T\,\omega^\wedge-\omega^\wedge n_Bn_B^T
    \end{bmatrix}\Biggr)I\xi
    \end{aligned}
\end{equation*}

\begin{figure}[t]
    \centering
    \includegraphics[width=\linewidth]{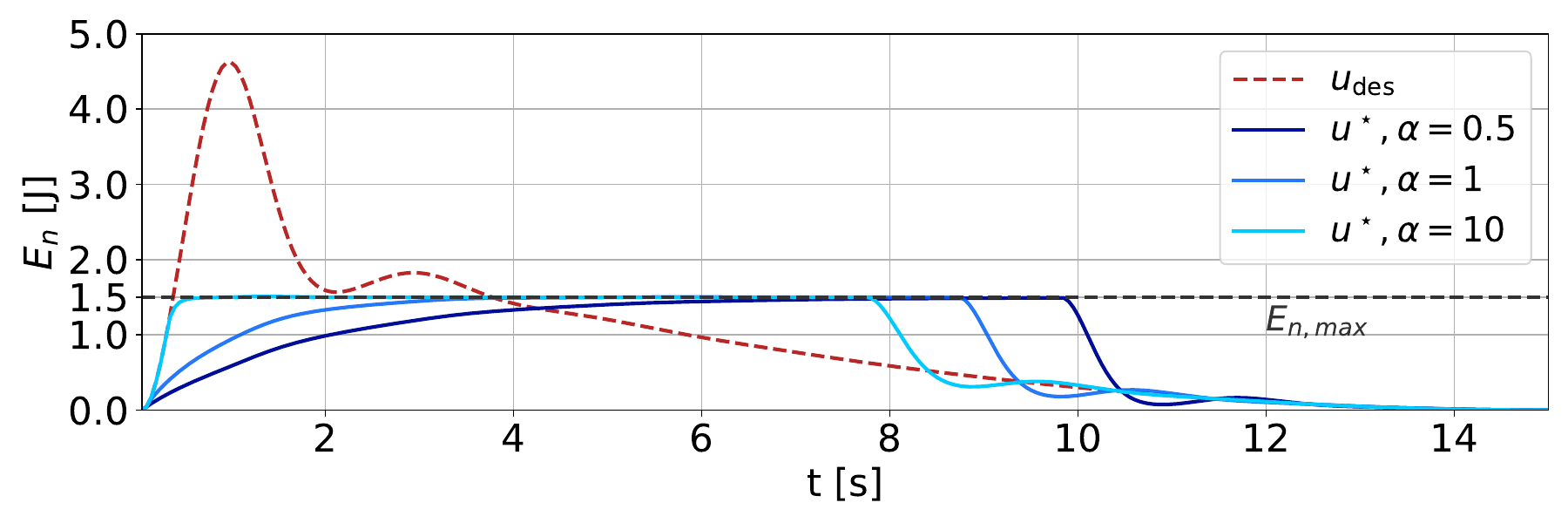}\vspace{-1mm}
    \caption{Comparison of the kinetic energy along the direction normal to the landing pad obtained using the nominal controller \(u_{\text{des}}\) (red dashed line) and the corresponding kinetic energy obtained using the proposed controller for different values of \(\alpha  \).}
    \label{fig:sim2_energy}
\end{figure}
The results in Figure \ref{fig:sim2_energy} show that the nominal controller drives the vehicle toward the pad with excessive kinetic energy, overshooting the admissible normal-direction energy and producing an unsafe contact. In contrast, the proposed CBF modulates the motion towards the pad limiting the kinetic energy to $E_{n,\max} = 1.5J$.

\section{Conclusions} \label{sec:conclusions}

In this work, we presented two classes of energy-augmented control barrier functions  enabling safety-critical control for rigid-body systems evolving on Lie groups.
The proposed formulation supports both configuration-dependent obstacle avoidance and world-frame directional energy limiting through a single analytical construction. 
Future work will explore other energetic applications such as energy limiting in body-frame directions, multi-agent interactions and hardware deployment.

\bibliography{ifacconf}

\end{document}